\newtheorem*{theorem}{Theorem}
\begin{document}


\title{\bf Probing the two-scale-factor universality hypothesis by exact rotation symmetry-breaking mechanism}



\author{J. F. S. Neto}
\author{K. A. L. Lima}
\author{P. R. S. Carvalho}
\email{prscarvalho@ufpi.edu.br}
\affiliation{\it Departamento de F\'\i sica, Universidade Federal do Piau\'\i, 64049-550, Teresina, PI, Brazil}

\author{M. I. Sena-Junior}
\email{marconesena@poli.br}
\affiliation{\it Escola Polit\'{e}cnica de Pernambuco, Universidade de Pernambuco, 50720-001, Recife, PE, Brazil}
\affiliation{\it Instituto de F\'{i}sica, Universidade Federal de Alagoas, 57072-900, Macei\'{o}, AL, Brazil}




\begin{abstract}
We probe the two-scale factor universality hypothesis by evaluating, firstly explicitly and analytically at the one-loop order, the loop quantum corrections to the amplitude ratios for O($N$) $\lambda\phi^{4}$ scalar field theories with rotation symmetry-breaking in three distinct and independent methods in which the rotation symmetry-breaking mechanism is treated exactly. We show that the rotation symmetry-breaking amplitude ratios turn out to be identical in the three methods and equal to their respective rotation symmetry-breaking ones, although the amplitudes themselves, in general, depend on the method employed and on the rotation symmetry-breaking parameter. At the end, we show that all these results can be generalized, through an inductive process based on a general theorem emerging from the exact calculation, to any loop level and physically interpreted based on symmetry ideas.   
\end{abstract}


\maketitle


\section{Introduction} 

\par The identical critical behavior displayed by different physical systems, as a fluid and a ferromagnet, near a continuous phase transition, had lead to the genesis of the universality concept \cite{PhysRevLett.16.11,JPhysC.2.1883,JPhysC.2.2158,PhysRev.176.738,PhysRevLett.24.1479,Kadanoff} related to the scaling hypothesis \cite{JChemPhys.43.3898,Phys.2.263,ZhEkspTeorFiz.50.439,JChemPhys.39.842}. The critical behavior of such systems is characterized by an identical set of critical exponents. When the critical behavior of two or more systems is described by equal critical exponents, we say that they belong to the same universality class \cite{PhysRevLett.28.240,PhysRevLett.28.548,Wilson197475}. This occurs when they share the same dimension $d$, $N$ and symmetry of some $N$-component order parameter and if the interactions present are of short- or long-range type. We will deal with the general O($N$) universality class which is a generalization of the specific models with short-range interactions: Ising ($N=1$), XY ($N=2$), Heisenberg ($N=3$), self-avoiding random walk ($N=0$), spherical ($N \rightarrow \infty$) etc \cite{Pelissetto2002549}. Furthermore, different systems can be represented by a single universal equation of state, once one has fixed two independent thermodynamic scales, as the order parameter and its conjugate field scales. Then, the equation of state and amplitude ratios for the thermodynamic functions are universal and thus satisfy the thermodynamic universality hypothesis \cite{PhysRev.24.1479,PhysRevB.12.1947}. Stauffer, Ferer and Wortis \cite{PhysRevLett.29.345} generalized the thermodynamic universality concept to the \textit{two-scale-factor} universality hypothesis for correlation functions, where before that work, it was suggested that universality for correlation functions would be inferred after the choice of three scales, with the additional scale to the thermodynamic ones being the length scale. This hypothesis asserts that, near the critical point, the length scale is not independent and it is related to the thermodynamic scales. Thus the universal correlation function can be fully determined after the choice of just two independent scales. Unlike the critical exponents themselves, the critical amplitudes of the thermodynamic  and correlation functions, near the critical point, are not universal quantities. The universal quantities in this case are some amplitude ratios of them. The aim of this work is to evaluate these amplitude ratios. 

\par The purpose of this paper is to employ field-theoretic renormalization-group and $\epsilon$-expansion techniques for computing, firstly explicitly and analytically at the one-loop order, the loop quantum corrections to the amplitude ratios for rotation symmetry-breaking O($N$) $\lambda\phi^{4}$ scalar field theories. This task plays a similar role in the description of a given universality class, although the evaluation of amplitude ratios is, in general, harder than for critical exponents. There is another field-theoretic renormalization-group approach for evaluating critical quantities. It is called the field-theoretic renormalization-group at fixed-dimension approach \cite{JStatPhys2349,PhysRevB353585} and is based on the computation of critical quantities directly in three dimensions. One important application of a rotation symmetry-breaking scalar field theory on the research area of high energy physics is explaining the Higgs behavior through the recently proposed rotation symmetry-breaking Higgs sector of the extended standard model \cite{PhysRevD.84.065030,Carvalho2013850,Carvalho2014320}. In a conventional rotation-invariant theory, the critical amplitudes are the amplitudes of the scaling thermodynamic functions and correlation functions, defined above and below the critical temperature. These functions, in turn, are a result of some derivative operations, some of them being derivatives of the magnetization $M$, of the reduced temperature, i. e., a parameter that is proportional to the difference between some arbitrary temperature and the critical one $t \propto T - T_{c}$ etc with respect to the free energy density or, in the present language, the effective potential with spontaneous symmetry breaking. The effective potential at the loop level considered explicitly and analytically here, the one-loop order, is composed of two terms. The first term is responsible for the so called Landau approximation values to the amplitude ratios, valid for $d\geq 4$. In the Landau regime, the fluctuations of the scalar field $\phi$, whose mean value is identified to the magnetization of the system are discarded. The second one, representing corrections to the Landau approximation, which takes into account the fluctuations as loop quantum corrections, valid for $2<d<4$, is the infinite sum of all the one-loop $1$PI vertex parts with amputated external legs. Initially, the effective potential is written in its bare or nonrenormalized form, thus plagued by infrared divergences, typical for massless theories, as the treated in this work. These divergences must be removed of the theory and are contained in just a few $1$PI vertex parts, the $\Gamma_{B}^{(2)}$, $\Gamma_{B}^{(4)}$ and $\Gamma_{B}^{(2,1)}$ functions commonly called primitively divergent $1$PI vertex parts. All the others divergent $1$PI vertex parts, obtained through a skeleton expansion \cite{ZinnJustin} of the former, turn out to be automatically renormalized once one has renormalized the primitive divergent ones. For attaining the renormalized theory, we will apply three independent renormalization schemes: normalization conditions \cite{BrezinLeGuillouZinnJustin}, the minimal subtraction scheme \cite{Amit} and the Bogoliubov-Parasyuk-Hepp-Zimmermann (BPHZ) methods \cite{BogoliubovParasyuk,Hepp,Zimmermann}. Universality is satisfied if the final results for the amplitude ratios, in the three distinct methods, are identical, although the amplitudes themselves, in general, depend on the renormalization scheme employed and on the rotation symmetry-breaking mechanism through the introduction of an appropriate rotation symmetry-breaking parameter to be defined below. As universality means that the critical exponents do not depend on the system in a given universality class, the corresponding critical exponents must be the same when obtained through any renormalization scheme. Universality then arises in any renormalization scheme through the flow of the renormalized coupling constant to its fixed point value in which scale invariance is manifest. The present calculation of amplitude ratios in just one renormalization scheme would be enough for showing that they are independent of the symmetry breaking tensor $K_{\mu\nu}$ at one-loop order. But similar calculations in different renormalization schemes besides providing a check of the final results that must be identical, give more robustness on the \emph{two-scale-factor} universality hypothesis validity. Furthermore, the minimal subtraction method for obtaining the amplitude ratios presented here, as opposed to the normalization conditions method, is not found in the literature. The probing of a possible effect of the rotation symmetry-breaking mechanism on the universality properties of the systems studied here, starts with the introduction into the rotation-invariant standard theory, the kinetic rotation symmetry-breaking O($N$) operator $K_{\mu\nu}\partial^{\mu}\phi\partial^{\nu}\phi$, as introduced for the pioneer evaluation of rotation symmetry-breaking critical exponents by one of the authors and co-workers \cite{EurophysLett.108.21001,Int.J.Geom.MethodsMod.Phys.13.1650049,Int.J.Mod.Phys.B.30.1550259}, although non-exactly in the rotation symmetry-breaking mechanism through tedious calculations in powers of $K_{\mu\nu}$. The dimensionless, symmetric, constant rotation symmetry-breaking coefficients $K_{\mu\nu} = K_{\nu\mu}$ are equal for all the $N$ components of the field and leave intact the O($N$) symmetry of the $N$-component field. Physically, they act as a constant background field. If the rotation symmetry-breaking coefficients are kept at arbitrary values, the rotation symmetry symmetry is violated if these coefficients do not transform as a second order tensor under rotation transformations. As for the earlier works on the computation of rotation symmetry-breaking critical exponents, the rotation symmetry-breaking theory can be used for studying the symmetry aspects of the O($N$) two-scale-factor universality class in the rotation symmetry-breaking scenario, now treating the rotation symmetry-breaking mechanism exactly.

\section{Amplitude ratios in normalization conditions scheme}

\par The normalization conditions renormalization scheme is characterized by fixing the external momenta of the primitively divergent $1$PI vertex parts at a nonzero value scaled by some arbitrary momentum scale $\kappa$, at the symmetry points $SP$ and $\overline{SP}$
\begin{eqnarray}\label{ygfdxzsze}
\Gamma^{(2)}(P^{2} + K_{\mu\nu}P^{\mu}P^{\nu} = 0, g) = 0, 
\end{eqnarray}
\begin{eqnarray}
\frac{\partial \Gamma^{(2)}(P^{2} + K_{\mu\nu}P^{\mu}P^{\nu}, g)}{\partial (P^{2} + K_{\mu\nu}P^{\mu}P^{\nu})}\Biggr|_{P^{2} + K_{\mu\nu}P^{\mu}P^{\nu} = \kappa^{2}}   = 1,
\end{eqnarray}
\begin{eqnarray}
\Gamma^{(4)}(P^{2} + K_{\mu\nu}P^{\mu}P^{\nu}, g)|_{SP} = g, 
\end{eqnarray}
\begin{eqnarray}\label{jijhygtfrd}
\Gamma^{(2,1)}(P_{1}, P_{2}, Q_{3}, g)|_{\overline{SP}} = 1,
\end{eqnarray}
where for SP: $P_{i}\cdot P_{j} = (\kappa^{2}/4)(4\delta_{ij}-1)$, implying that $(P_{i} + P_{j})^{2} \equiv P^{2} = \kappa^{2}$ for $i\neq j$ and for $\overline{SP}$: $P_{i}^{2} = 3\kappa^{2}/4$ and $P_{1}\cdot P_{2} = -\kappa^{2}/4$, implying $(P_{1} + P_{2})^{2} \equiv P^{2} = \kappa^{2}$, of the multiplicatively renormalized primitively $1$PI vertex parts $\Gamma^{(n, l)}(P_{i}, Q_{j}, g, \kappa) = Z_{\phi}^{n/2}Z_{\phi^{2}}^{l}\Gamma_{B}^{(n, l)}(P_{i}, Q_{j}, \lambda_{0})$ ($i = 1, \cdots, n$, $j = 1, \cdots, l$, where for $(n, l)\neq(0, 2)$, the function $\Gamma_{B}^{(0, 2)}$ is renormalized additively), generated by the initially bare rotation symmetry-breaking Lagrangian density
\begin{eqnarray}\label{huytrji}
\mathcal{L}_{B} = \frac{1}{2}\partial^{\mu}\phi_{B}\partial_{\mu}\phi_{B} + K_{\mu\nu}\partial^{\mu}\phi_{B}\partial^{\nu}\phi_{B} + \frac{\lambda_{B}}{4!}\phi_{B}^{4} + \frac{1}{2}t_{B}\phi_{B}^{2},
\end{eqnarray}
where the conditions (\ref{ygfdxzsze})-(\ref{jijhygtfrd}) permit us to renormalize the bare field $\phi_{B}$, coupling constant $\lambda_{B}$ and composite field coupling constant $t_{B}$ parameters. Thus after the renormalization of these parameters, we can write down the renormalized rotation symmetry-breaking free energy density at the fixed point with spontaneous symmetry breaking at one-loop level as \cite{BrezinLeGuillouZinnJustin}
\begin{eqnarray}\label{fhdldglkjdflk}
&& \mathcal{F}(t,M,g^{\ast}) = \frac{1}{2}tM^{2} + \frac{1}{4!}g^{\ast}M^{4} + \frac{1}{4}\left[ Nt^{2} + \frac{N+2}{3}tg^{\ast}M^{2} + \frac{N+8}{36}(g^{\ast}M^{2})^{2} \right]\parbox{10mm}{\includegraphics[scale=1.0]{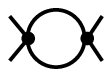}}_{SP} + \nonumber \\ && \frac{1}{2}\int d^{d}q\Biggl[ \ln\left( 1 + \frac{t + g^{\ast}M^{2}/2}{q^{2} + K_{\mu\nu}q^{\mu}q^{\nu}} \right) + (N-1)\ln\left( 1 + \frac{t + g^{\ast}M^{2}/6}{q^{2} + K_{\mu\nu}q^{\mu}q^{\nu}} \right) - \frac{N+2}{6}\frac{tg^{\ast}M^{2}}{q^{2} + K_{\mu\nu}q^{\mu}q^{\nu}} \Biggl],\nonumber \\
\end{eqnarray}
where $g$, $t$ and $M$ are the renormalized coupling constant, composite field coupling constant and magnetization (as being the renormalized nonzero field mean value $M = \langle\phi\rangle$ in the spontaneously broken direction), respectively. The coupling constant $g^{\ast}$ is the fixed point of the theory, the value for which the renormalized coupling constant flows naturally when the renormalized theory is attained and, in general, is obtained as the nontrivial root of the $\beta$-function of the respective theory in the respective renormalization scheme. The ``fish" diagram $\parbox{6mm}{\includegraphics[scale=0.6]{fig10.eps}}_{SP}$, whose internal line is given by the massless propagator $\parbox{7mm}{\includegraphics[scale=0.6]{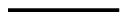}}^{-1} \equiv q^{2} + K_{\mu\nu}q^{\mu}q^{\nu}$, is evaluated at the symmetry point $SP$ after we set $\kappa^{2} = 1$, because we can redefine all momenta of the diagrams in units of $\kappa$. Thus the redefined external momenta turn out to be dimensionless and the symmetry point now is given by $P^{2} + K_{\mu\nu}P^{\mu}P^{\nu} = \kappa^{2} \rightarrow 1$. Then we absorb the dependence on $\kappa$ of the diagrams into the coupling constant. The ``fish" diagram was evaluated in a expansion in the dimensional regularization parameter $\epsilon = 4 - d$ and exactly in $K$ \cite{CarvalhoSenaJunior}, see Sect. \ref{All-loop order amplitude ratios}. In the analytical computation of the momentum integral, we apply the known definition of ref. \cite{Amit} in which the $d$-dimensional surface area factor is absorbed into a redefinition of the coupling constant, since each loop momentum integration is accompanied of this factor. Thus, we can write
\begin{eqnarray}
\parbox{10mm}{\includegraphics[scale=1.0]{fig10.eps}}_{SP} = \frac{1}{\epsilon}\left(1 + \frac{1}{2}\epsilon \right)\mathbf{\Pi},
\end{eqnarray} 
where $\mathbf{\Pi} = 1/\sqrt{det(\mathbb{I} + \mathbb{K})}$ is a rotation symmetry-breaking full factor emerging from the exact calculation where we change the variables \cite{CarvalhoSenaJunior} through coordinates redefinition applied in momentum space directly in Feynman diagrams $q^{\prime} = \sqrt{\mathbb{I} + \mathbb{K}}\hspace{1mm}q$ resulting in the fact that each loop integration is accompanied by a rotation symmetry-breaking full $\mathbf{\Pi}$ factor and according to \cite{CarvalhoSenaJunior},
\begin{eqnarray}\label{hufhufghuf}
g^{\ast} = \frac{6\epsilon}{(N + 8)\mathbf{\Pi}}\left\{ 1 + \epsilon\left[ \frac{(9N + 42)}{(N + 8)^{2}} -\frac{1}{2} \right]\right\}. 
\end{eqnarray}
Now, we are in a position to evaluate the critical amplitudes of the thermodynamic and correlation functions. The existence of two independent scales, leads naturally to the existence of ten relations among the twelve critical exponents $\alpha, \alpha^{\prime}, \gamma, \gamma^{\prime}, \nu, \nu^{\prime}, \beta, \delta, \eta, \alpha_{c}, \gamma_{c}, \nu_{c}$ as well as for the critical amplitudes, with ten universal relations among the twelve critical amplitudes
\begin{eqnarray}\label{uhdfuhfuh}
\begin{array}{lcr}
\mbox{\textrm{Critical isochore: $T > T_{c}$, $H = 0$}} &  &  \\
\mbox{} \xi = \xi_{0}^{+}t^{-\nu}, \chi = C^{+}t^{-\gamma}, C_{s} = \frac{A^{+}}{\alpha_{+}}t^{-\alpha} &  \\ [10pt]
\mbox{\textrm{Critical isochore: $T < T_{c}$, $H = 0$}} &  &  \\
\mbox{} \xi = \xi_{0}^{-}t^{-\nu}, \chi = C^{-}t^{-\gamma}, C_{s} = \frac{A^{-}}{\alpha_{-}}t^{-\alpha}, M = B(-t)^{\beta} &  \\ [10pt]
\mbox{\textrm{Critical isotherm: $T = T_{c}$, $H \neq 0$}} &  &  \\ 
\mbox{} \xi = \xi_{0}^{c}|H|^{-\nu_{c}}, \chi = C^{c}|H|^{-\gamma_{c}}, C_{s} = \frac{A^{c}}{\alpha_{c}}|H|^{-\alpha_{c}}, H = DM^{\delta} &  \\ [10pt]
\mbox{\textrm{Critical point: $T = T_{c}$, $H = 0$}} &  &  \\
\mbox{} \chi(p) = \widehat{D}p^{\eta-2}. &  \\ \end{array} \nonumber
\end{eqnarray}\label{jlkfdjkglk}
Fortunately, not all critical exponents must be evaluated, because not all of them are independent. Some of them are related, as the ones defined above and below the critical temperature and through some scaling relations among them, namely: $\alpha = \alpha^{\prime}, \gamma = \gamma^{\prime}, \nu = \nu^{\prime}, \gamma = \beta(\delta - 1), \alpha = 2 - 2\beta - \gamma, 2 - \alpha = d\nu, \gamma = (2 - \eta)\nu, \alpha_{c} = \alpha/\beta\delta, \gamma_{c} = 1 - 1/\delta, \nu_{c} = \nu/\beta\delta$, thus remaining two independent ones. Also, as $H$ and $\chi$ are related on the critical isotherm, the universal relation $\delta C^{c}D^{1/\delta}$ follows, and nine universal relations among the critical amplitudes remains. More than nine universal relations can be derived, some of them being dependent of a minimal set of nine ones. This shows that we can choice a given minimal set. The minimal set chosen in this paper will be that whose the $\epsilon$-expansion results are displayed in ref. \cite{PrivmanHohenbergAharony}, originally evaluated in refs. therein. 

\par \textit{Equation of state}. Before computing the amplitude ratios themselves, it is important to begin by computing the equation of state and its universal form. The equation of state is obtained as a first derivative of the free energy with respect to the magnetization, namely $H = \partial \mathcal{F}/\partial M$ \cite{BrezinLeGuillouZinnJustin}, whose diagrammatic expression is given by \cite{Bervillier}
\begin{eqnarray}\label{jglkfjkflj}
&& H/M = t + \frac{1}{6}g^{\ast}M^{2} + \frac{1}{2}g^{\ast}\Biggl\{ \Biggl[ \parbox{12mm}{\includegraphics[scale=1.0]{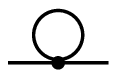}}_{(1)} + (t + g^{\ast}M^{2}/2)\parbox{10mm}{\includegraphics[scale=1.0]{fig10.eps}}_{SP} \Biggl]  + \nonumber \\ &&  \frac{N - 1}{3}\Biggl[ \parbox{12mm}{\includegraphics[scale=1.0]{fig1.eps}}_{(N - 1)} + (t + g^{\ast}M^{2}/6)\parbox{10mm}{\includegraphics[scale=1.0]{fig10.eps}}_{SP} \Biggl] \Biggl\}. 
\end{eqnarray}
The indices $(1)$ and $(N-1)$ indicate that in the respective diagram, the internal propagators are $\parbox{7mm}{\includegraphics[scale=0.6]{fig9.eps}}_{(1)}^{-1} \equiv q^{2} + K_{\mu\nu}q^{\mu}q^{\nu} + g^{\ast}M^{2}/2$ and $\parbox{7mm}{\includegraphics[scale=0.6]{fig9.eps}}_{(N - 1)}^{-1} \equiv q^{2} + K_{\mu\nu}q^{\mu}q^{\nu} + g^{\ast}M^{2}/6$, respectively. The evaluation of the Feynman diagrams $\parbox{7mm}{\includegraphics[scale=.6]{fig1.eps}}_{(1)}$ and $\parbox{7mm}{\includegraphics[scale=.6]{fig1.eps}}_{(N - 1)}$ in $d = 4 - \epsilon$ result 
\begin{eqnarray}\label{fldsjlkfdjlk}
\parbox{12mm}{\includegraphics[scale=1.0]{fig1.eps}}_{(1)} = -\frac{t + g^{\ast}M^{2}/2}{\epsilon}\Biggl[ 1 - \frac{1}{2}\ln \Biggl( t + \frac{g^{\ast}M^{2}}{2} \Biggl)\epsilon \Biggl]\mathbf{\Pi},  
\end{eqnarray}
and a similar expression for $\parbox{7mm}{\includegraphics[scale=.6]{fig1.eps}}_{(N - 1)}$ with $g^{\ast}M^{2}/2 \rightarrow g^{\ast}M^{2}/6$. The equation of state is clearly a rotation symmetry-breaking one. But if $H(x)$ is normalized at the values $x = 0$ and $x = -1$ such that $H = M^{\delta}$ and $H = 0$, respectively, where $x = t(g^{\ast}M^{2})^{-1/2\beta}$, the rotation symmetry-breaking $\Pi$ and $f^{(2)}$ factors disappear and it assumes its known universal form \cite{BrezinWallaceWilson}. Now we proceed to evaluate the minimal set of amplitude ratios.  
 
\par $A^{+}/A^{-}$. The critical amplitudes for the specific heat can be obtained as the second derivative of the free energy $\mathcal{F}(t,M,g^{\ast})$ with respect to $t$. For describing the regions above and below the transition point, the values of the magnetization in the respective regions can calculated by minimizing the effective potential (equivalently the roots of $H$), giving the magnetization values of the O($N$) symmetric phase and the spontaneously broken one. The results for the referred amplitudes are
\begin{eqnarray}\label{hduhsduhsdu}
&& A^{+} = \frac{N}{4}\Bigg[ 1 + \Bigg( \frac{4}{4 - N} + A_{N} \Bigg)\epsilon \Bigg]\mathbf{\Pi}, 
\end{eqnarray}
\begin{eqnarray}\label{jdoijdio}
 A^{-} = \Bigg[ 1 + \Bigg( \frac{N}{4 - N} - \frac{4 - N}{2(N + 8)}\ln 2 + A_{N} \Bigg)\epsilon \Bigg]\mathbf{\Pi} ,
\end{eqnarray}     
where
\begin{eqnarray}\label{hduhsduhsduu}
A_{N} = \frac{1}{2} - \frac{9N + 42}{(N + 8)^{2}} - \frac{4 - N}{(N + 8)} - \frac{(N + 2)(N^{2} + 30N + 56)}{2(4 - N)(N + 8)^{2}}. 
\end{eqnarray}

\par $C^{+}/C^{-}$. We can obtain the amplitude ratio $C^{+}/C^{-}$ by computing the susceptibility in terms of the effective potential, or equivalently of the equation of state, through $\chi^{-1} = \partial^{2}\mathcal{F}/\partial M^{2} = \partial H/\partial M$ and evaluating the amplitudes above and below the transition. We have to mention a peculiarity here: Below the critical temperature, the susceptibility is defined only for Ising systems due to the presence of Goldstone modes. Thus we have  
\begin{eqnarray}\label{jhdshudfhuifh}
&& C^{+} = 1 - \frac{N + 2}{2(N + 8)}\epsilon , 
\end{eqnarray}
\begin{eqnarray}\label{ygdsuygsduygds}
&& C^{-} = \frac{1}{2}\Bigg[1 - \frac{1}{6}\left( 4 + \ln 2 \right)\epsilon\Bigg].
\end{eqnarray}     

\par $Q_{1}$. The amplitude ratio $Q_{1}$ is related to the $R_{\chi}$ one through $Q_{1} = R_{\chi}^{-1/\delta}$, where $R_{\chi}$ is defined as $R_{\chi} = C^{+}DB^{\delta - 1}$. Thus the universality of $Q_{1}$ is ensured if $R_{\chi}$ is universal. The amplitude $C^{+}$ is displayed in eq. (\ref{jhdshudfhuifh}). The amplitude $D$ is obtained by normalizing $H(x)$ only at the value $x = 0$ such that $H = M^{\delta}$, where $x = t(g^{\ast}M^{2})^{-1/2\beta}$, implying that
\begin{eqnarray}\label{fhduhdfu}
D = \frac{1}{6}g^{\ast(\delta - 1)/2}\Bigg[1 + \frac{1}{2}\Bigg( 1 - \ln 2 - \frac{N - 1}{N + 8}\ln 3 \Bigg)\epsilon\Bigg]. 
\end{eqnarray} 
The amplitude B can be calculated from the nonzero root of $H$, namely
\begin{eqnarray}\label{huhfuhfuhf}
B = \Bigg\{\frac{N + 8}{\epsilon}\Bigg[1 - \frac{3}{N + 8}\left( 1 + \ln 2 \right)\epsilon - \Bigg(\frac{9N + 42}{(N + 8)^{2}} -\frac{1}{2} \Bigg)\epsilon \Bigg]\mathbf{\Pi}\Bigg\}^{1/2}. 
\end{eqnarray} 

\par $R_{c}$. The ratio $R_{c}$ is defined as $R_{c} = A^{+}C^{+}/B^{2}$. All the amplitudes necessary to the computation of $R_{c}$ were evaluated already in eqs. (\ref{hduhsduhsdu}), (\ref{jhdshudfhuifh}) and (\ref{huhfuhfuhf})

\par $\xi_{0}^{+}/\xi_{0}^{-}$. For calculating the amplitude ratio between the correlation length above and below the transition, we have to consider the momentum-dependent longitudinal correlation function \cite{BrezinLeGuillouZinnJustin} with the diagrammatic expansion \cite{Bervillier}
\begin{eqnarray}\label{uhduhuhg}
&& \Gamma_{L}(P^{2} + K_{\mu\nu}P^{\mu}P^{\nu},t,M) = P^{2} + K_{\mu\nu}P^{\mu}P^{\nu} + t + \frac{1}{2}g^{\ast}M^{2} + \nonumber \\ && \frac{1}{2}g^{\ast}\Biggl\{ \Biggl[ \parbox{12mm}{\includegraphics[scale=1.0]{fig1.eps}}_{(1)} + (t + g^{\ast}M^{2}/2)\parbox{10mm}{\includegraphics[scale=1.0]{fig10.eps}}_{SP} \Biggl]  +  \frac{N - 1}{3}\Biggl[ \parbox{12mm}{\includegraphics[scale=1.0]{fig1.eps}}_{(N - 1)} + (t + g^{\ast}M^{2}/6)\parbox{10mm}{\includegraphics[scale=1.0]{fig10.eps}}_{SP} \Biggl] + \nonumber \\ && g^{\ast}M^{2}\Biggl[ \Biggl(\parbox{10mm}{\includegraphics[scale=1.0]{fig10.eps}}_{SP} - \parbox{10mm}{\includegraphics[scale=1.0]{fig10.eps}}_{(1)} \Biggl) + \frac{N - 1}{9}\Biggl(\parbox{10mm}{\includegraphics[scale=1.0]{fig10.eps}}_{SP} - \parbox{10mm}{\includegraphics[scale=1.0]{fig10.eps}}_{(N - 1)} \Biggl) \Biggl] \Biggl\}. 
\end{eqnarray}
After the $\epsilon$-expansion of the $\parbox{7mm}{\includegraphics[scale=.6]{fig10.eps}}_{(1)}$ and $\parbox{7mm}{\includegraphics[scale=.6]{fig10.eps}}_{(N - 1)}$ diagrams exactly in $K$, we have
\begin{eqnarray}\label{gfjifgjigjk}
\parbox{10mm}{\includegraphics[scale=1.0]{fig10.eps}}_{(1)} = \frac{1}{\epsilon}\Biggl[ 1 - \frac{1}{2}\epsilon -\frac{1}{2}L(P^{2} + K_{\mu\nu}P^{\mu}P^{\nu})\epsilon \Biggl]\mathbf{\Pi},
\end{eqnarray}
where
\begin{eqnarray}\label{gjkljjfodfi}
&& L(P^{2} + K_{\mu\nu}P^{\mu}P^{\nu}) = \int_{0}^{1}dx\ln\left[x(1-x)(P^{2} + K_{\mu\nu}P^{\mu}P^{\nu}) + t + \frac{g^{\ast}M^{2}}{2}\right]  
\end{eqnarray}
with an analog expression for $\parbox{7mm}{\includegraphics[scale=.6]{fig10.eps}}_{(N - 1)}$ with $g^{\ast}M^{2}/2 \rightarrow g^{\ast}M^{2}/6$, now defining the correlation length as the second moment of the spin-spin correlation function as $\xi^{2} = (d\Gamma_{L}/d(P^{2} + K_{\mu\nu}P^{\mu}P^{\nu}))|_{P^{2} + K_{\mu\nu}P^{\mu}P^{\nu} = 0}/\Gamma(P^{2} + K_{\mu\nu}P^{\mu}P^{\nu} = 0)$,  
\begin{eqnarray}\label{hjfdhjfhjf}
\xi_{0}^{+} = 1 - \frac{N + 2}{4(N + 8)}\epsilon , 
\end{eqnarray}
\begin{eqnarray}\label{kjlkjhlkh}
\xi_{0}^{-} = 2^{-1/2}\Bigg[1 - \frac{1}{12}\Bigg( \frac{7}{2} +\ln 2 \Bigg)\epsilon\Bigg].
\end{eqnarray}  
As we can see in eq. (\ref{kjlkjhlkh}), $\xi_{0}^{-}$ is not defined for all $N \neq 1$. The reason is the same as for the critical amplitude $C^{-}$: the existence of divergences generated by the presence of Goldstone modes.

\par $R_{\xi}^{+}$. The amplitudes needed for the evaluation of the ratio $R_{\xi}^{+}$ are displayed in eqs. (\ref{hduhsduhsdu}) and (\ref{hjfdhjfhjf}) through the definition $R_{\xi}^{+} = \xi_{0}^{+}(A^{+})^{1/d}$.  

\par $\xi_{0}^{+}/\xi_{0}^{T}$. The momentum-dependent transverse correlation function \cite{BrezinLeGuillouZinnJustin} is given by its diagrammatic expansion \cite{Bervillier} 
\begin{eqnarray}\label{uhdsuhdfuhdfu}
&& \Gamma_{T}(P^{2} + K_{\mu\nu}P^{\mu}P^{\nu},t,M) = P^{2} + K_{\mu\nu}P^{\mu}P^{\nu} + t + \frac{1}{6}g^{\ast}M^{2} + \nonumber \\ && \frac{1}{6}g^{\ast}\Biggl\{ \Biggl[ \parbox{12mm}{\includegraphics[scale=1.0]{fig1.eps}}_{(1)} + (t + g^{\ast}M^{2}/2)\parbox{10mm}{\includegraphics[scale=1.0]{fig10.eps}}_{SP} \Biggl]  + (N + 1)\Biggl[ \parbox{12mm}{\includegraphics[scale=1.0]{fig1.eps}}_{(N - 1)} + (t + g^{\ast}M^{2}/6)\parbox{10mm}{\includegraphics[scale=1.0]{fig10.eps}}_{SP} \Biggl] + \nonumber \\ && \frac{2}{3}g^{\ast}M^{2}\Biggl[ \Biggl(\parbox{10mm}{\includegraphics[scale=1.0]{fig10.eps}}_{SP} - \parbox{10mm}{\includegraphics[scale=1.0]{fig10.eps}}_{(1,(N - 1))} \Biggl) \Biggl] \Biggl\}, 
\end{eqnarray}
where the index $(1,(N - 1))$ in the referred diagram means that its internal propagators are the $\parbox{7mm}{\includegraphics[scale=0.6]{fig9.eps}}_{(1)}^{-1}$ and the $\parbox{7mm}{\includegraphics[scale=0.6]{fig9.eps}}_{(N - 1)}^{-1}$ ones, respectively. The $\epsilon$-expansion of this diagram results in similar expressions to the (\ref{gfjifgjigjk})-(\ref{gjkljjfodfi}) ones with the substitution $ t + g^{\ast}M^{2}/2 \rightarrow x(t + g^{\ast}M^{2}/2) + (1 - x)(t + g^{\ast}M^{2}/6)$. Thus defining the transverse correlation length by $\Gamma_{T}(P^{2} + K_{\mu\nu}P^{\mu}P^{\nu})|_{H = 0}\underset{P \rightarrow 0}\sim (P^{2} + K_{\mu\nu}P^{\mu}P^{\nu})(\xi_{T})^{2 - d}/M^{2}$, we obtain 
\begin{eqnarray}\label{fdgfdsadfs}
\xi_{0}^{T} = \Biggl\{ \frac{\epsilon}{(N + 8)\mathbf{\Pi}}\Bigg[1 + \frac{3}{N + 8}\Bigg( \frac{5}{6} + \ln 2 \Bigg)\epsilon + \Bigg(\frac{9N + 42}{(N + 8)^{2}} -\frac{1}{2} \Bigg)\epsilon \Bigg]\Biggl\}^{1/(d - 2)}. 
\end{eqnarray} 

\par $Q_{2}$. For the evaluation of the ratio $Q_{2}$ defined by $Q_{2} = (C^{+}/C^{c})(\xi_{0}^{c}/\xi_{0}^{+})^{2 - \eta}$ (the critical exponent $\eta$ can be set to zero at the loop level here), we have computed the amplitudes $C^{+}$ and $\xi_{0}^{+}$ already in eqs. (\ref{jhdshudfhuifh}) and (\ref{hjfdhjfhjf}), respectively. Now, from the susceptibility and correlation length at the critical point, we get
\begin{eqnarray}\label{ouypuopuo}
C^{c} = \frac{2D^{1/\delta}}{g^{\ast 1/2\beta}} \Bigg\{ 1 - \frac{9}{2(N + 8)}\Bigg[ \left( 1 - \ln 2 \right) +  \frac{N - 1}{9}\left( 1 - \ln 6 \right) + \frac{2(N + 8)}{27} \Bigg]\epsilon\Bigg\},\nonumber  \\ &&
\end{eqnarray}
\begin{eqnarray}\label{bxvbcvbxvc}
\xi_{0}^{c} = \frac{2^{1/2}D^{1/2\delta}}{g^{\ast 1/4\beta}} \Bigg\{ 1 - \frac{9}{4(N + 8)}\Bigg[ \left( 1 - \ln 2 \right) + \frac{N - 1}{9}\left( 1 - \ln 6 \right) + \frac{N + 14}{27} \Bigg]\epsilon\Bigg\}^{1/2}.
\end{eqnarray}

\par $Q_{3}$. In the definition of the $Q_{3}$ ratio given by $Q_{3} = \widehat{D}(\xi_{0}^{+})^{2 - \eta}/C^{+}$, the amplitudes $C^{+}$ and $\xi_{0}^{+}$ are shown in eqs. (\ref{jhdshudfhuifh}) and (\ref{hjfdhjfhjf}). The remaining $\widehat{D}$ amplitude is evaluated from the momentum-dependent longitudinal correlation function at the critical point and arbitrary momentum. Thus we get
\begin{eqnarray}\label{hjdghssf}
 \widehat{D} = 1. 
\end{eqnarray}
As it can be seen, the rotation symmetry-breaking full $\mathbf{\Pi}$ factor disappear in the final expressions to all the amplitude ratios above and we obtain their rotation-invariant counterparts \cite{PrivmanHohenbergAharony}.

\section{Amplitude ratios in minimal subtraction scheme}

\par In the minimal subtraction renormalization scheme, the external momenta of the $1$PI vertex parts to be renormalized, by minimally subtracting the dimensional poles, are held at arbitrary values, showing that this method is more general and elegant than the earlier. Thus, eqs. (\ref{ygfdxzsze})-(\ref{jijhygtfrd}) must not hold necessarily. Then, minimally renormalizing the bare field $\phi_{B}$, coupling constant $\lambda_{B}$ and composite field coupling constant $t_{B}$, the renormalized effective potential $\mathcal{F}(t,M,g^{\ast})$ of eq. (\ref{fhdldglkjdflk}) assumes a similar form, but with the change $\parbox{6mm}{\includegraphics[scale=0.6]{fig10.eps}}_{SP} \longrightarrow [\parbox{6mm}{\includegraphics[scale=0.6]{fig10.eps}}]_{S}$, where $\parbox{6mm}{\includegraphics[scale=0.6]{fig10.eps}}$ is, for example, the $\parbox{6mm}{\includegraphics[scale=0.6]{fig10.eps}}_{(1)}$ diagram or, similarly, the $\parbox{6mm}{\includegraphics[scale=0.6]{fig10.eps}}_{(N - 1)}$ one with their external momenta held at arbitrary values and $[~~~]_{S}$ means that what is to be considered inside the brackets are the singular terms of the diagram, not the regular ones, as opposed to the normalization conditions renormalization method in which the regular terms are also taken into account. The nontrivial fixed point, i. e., the nontrivial root of the $\beta$-function in this scheme is given by \cite{Int.J.Geom.MethodsMod.Phys.13.1650049}
\begin{eqnarray}\label{gdcnkiy}
&& g^{\ast} = \frac{6\epsilon}{(N + 8)\mathbf{\Pi}}\left[ 1 + \frac{9N + 42}{(N + 8)^{2}} \epsilon \right]. 
\end{eqnarray}
Thus, performing the same steps of the earlier section, the expressions for the amplitudes are  
\begin{eqnarray}\label{terreteu}
&& A^{+} = \frac{N}{4}\Bigg[ 1 + \Bigg( \frac{4}{4 - N} + A_{N}^{\prime} \Bigg)\epsilon \Bigg]\mathbf{\Pi}, 
\end{eqnarray}
\begin{eqnarray}\label{eytwertyer}
 A^{-} = \Bigg[1 + \Bigg( \frac{N}{4 - N} - \frac{4 - N}{2(N + 8)}\ln 2 + A_{N}^{\prime} \Bigg)\epsilon\Bigg]\mathbf{\Pi},
\end{eqnarray}     
where
\begin{eqnarray}\label{reytreytrwe}
A_{N}^{\prime} = - \frac{9N + 42}{(N + 8)^{2}} - \frac{4 - N}{(N + 8)} - \frac{(N + 2)(N^{2} + 30N + 56)}{2(4 - N)(N + 8)^{2}},
\end{eqnarray}
\begin{eqnarray}\label{uiyiupoiu}
&& C^{+} = 1, 
\end{eqnarray}
\begin{eqnarray}\label{iuoipiyo}
&& C^{-} = \frac{1}{2}\Bigg[1 - \frac{1}{6}( 3 + \ln 2 )\epsilon\Bigg],
\end{eqnarray}  
\begin{eqnarray}\label{tretrety}
D = \frac{1}{6}g^{\ast(\delta - 1)/2}\Bigg[1 - \frac{1}{2}\Bigg( \ln 2 + \frac{N - 1}{N + 8}\ln 3 \Bigg)\epsilon\Bigg],
\end{eqnarray} 
\begin{eqnarray}\label{tuytyut}
B = \Bigg\{\frac{N + 8}{\epsilon}\Bigg[1 - \frac{3\ln 2}{N + 8}\epsilon - \frac{9N + 42}{(N + 8)^{2}}\epsilon \Bigg]\mathbf{\Pi}\Bigg\}^{1/2},
\end{eqnarray} 
\begin{eqnarray}\label{xcvbcxzvcxzv}
\xi_{0}^{+} = 1,
\end{eqnarray}
\begin{eqnarray}\label{xvnbxvznbx}
\xi_{0}^{-} = 2^{-1/2}\Bigg[1 - \frac{1}{12}\Bigg( \frac{5}{2} +\ln 2 \Bigg)\epsilon\Bigg],
\end{eqnarray}
\begin{eqnarray}\label{tryteyt}
\xi_{0}^{T} = \Biggl\{ \frac{\epsilon}{(N + 8)\mathbf{\Pi}}\Bigg[1 + \frac{3}{N + 8}\Bigg( -\frac{1}{6} + \ln 2 \Bigg)\epsilon + \frac{9N + 42}{(N + 8)^{2}}\epsilon \Bigg]\Biggl\}^{1/(d - 2)},
\end{eqnarray} 
\begin{eqnarray}\label{bcnbvcbc}
C^{c} = \frac{2D^{1/\delta}}{g^{\ast 1/2\beta}} \Bigg\{ 1 - \frac{9}{2(N + 8)}\Bigg[ - \ln 2 - \frac{N - 1}{9}\ln 6 + \frac{2(N + 8)}{27} \Bigg]\epsilon\Bigg\},
\end{eqnarray}
\begin{eqnarray}\label{bvnbcvb}
\xi_{0}^{c} = \frac{2^{1/2}D^{1/2\delta}}{g^{\ast 1/4\beta}} \Bigg\{ 1 - \frac{9}{4(N + 8)}\Bigg[ - \ln 2 - \frac{N - 1}{9}\ln 6 + \frac{N + 14}{27} \Bigg]\epsilon\Bigg\}^{1/2},
\end{eqnarray}
\begin{eqnarray}\label{eetetee}
&& \widehat{D} = 1. 
\end{eqnarray}
The amplitude ratios obtained in eqs. (\ref{terreteu})-(\ref{eetetee}) are the same as the respective rotation-invariant ones \cite{PrivmanHohenbergAharony}, as well as the equation of state in its universal form \cite{BrezinLeGuillouZinnJustin}. One interesting feature of this method is that the amplitudes themselves do not depend explicitly on the rotation symmetry-breaking full $\mathbf{\Pi}$ factor. The absence of explicit dependence on non-universal features is not an exception of rotation symmetry-breaking properties in this method. It is also observed in treating critical properties of rotation-invariant finite size systems subject to periodic and anti-periodic \cite{SilvaJrLeite} as well as Dirichlet and Newmann \cite{SantosSilvaJrLeite} boundary conditions.

\section{Amplitude ratios in the BPHZ method}\label{Amplitude ratios in the BPHZ method}
 
\par In the BPHZ method, the most general and elegant one, the divergences of the bare theory are eliminated by adding terms to the initially bare Lagrangian density. These terms introduced are called counterterms. This process can be applied order by order in perturbation theory. At the loop level considered in this work, we can once again renormalize the field $\phi_{B}$, coupling constant $\lambda_{B}$ and composite field coupling constant $t_{B}$ by adding counterterms to turn them finite \cite{Int.J.Mod.Phys.B.30.1550259}. The resulting renormalized effective potential $\mathcal{F}(t,M,g^{\ast})$, with $g^{\ast}$ being the same in eq. (\ref{gdcnkiy}), will display once again the form in eq. (\ref{fhdldglkjdflk}), now with $\parbox{6mm}{\includegraphics[scale=0.6]{fig10.eps}}_{SP} \longrightarrow \mathcal{K}(\parbox{6mm}{\includegraphics[scale=0.6]{fig10.eps}})$. The symbol $\mathcal{K}(~~~)$ indicates that only the singular part of the diagram is to be considered. Although this renormalization process be distinct of that of the earlier section, at least at the loop level treated here the effective potential is the same as the one of the earlier section, thus leading to the same amplitudes computed in that section. This leads to identical amplitude ratios found in the last two sections. Now we proceed to compute the amplitude ratios valid for all-loop order.

\section{All-loop order amplitude ratios}\label{All-loop order amplitude ratios}

\par In this section, we show the universality of amplitude ratios for O($N$) rotation symmetry-breaking self-interacting scalar field theories valid for any loop level in the BPHZ method, the most general and elegant one (without loss of generality, similar arguments can be used in the other methods). We follow the same steps of ref. \cite{Bervillier} used for rotation-invariant amplitude ratios. A given general amplitude $A_{G}$ is of the form $A_{G} = X^{a}Y^{b}F(g^{\ast})$, where $a$ and $b$ are integer numbers, critical exponents or a combination of them and $F$, a general function of $g^{\ast}$ only. The $X$ and $Y$ factors are non-universal and are responsible for the scale-dependence of the critical amplitudes. They are related, for example, to the order parameter and conjugate field scales, which are the two fixed independent scales needed for the establishment of the two-scale-factor universality hypothesis. The another source of non-universality is contained in the general function $F(g^{\ast})$, through the fixed point, whose value depends on the renormalization scheme employed. It was shown, in a general proof and therefore valid for any loop level \cite{Bervillier}, that the amplitude ratios are independent of the fixed independent scale factors. Thus, we have, additionally, to show that the amplitude ratios are independent of the rotation symmetry-breaking parameters $K_{\mu\nu}$, thus being identical to the corresponding rotation-invariant ones. This task will be achieved if we show that the rotation symmetry-breaking general function $F(g^{\ast})$ is the same as its rotation-invariant counterpart, i. e. that is does not depend on $K_{\mu\nu}$. We apply the general theorem \cite{CarvalhoSenaJunior} 
\begin{theorem} 
Consider a given Feynman diagram in momentum space of any loop order in a theory represented by the Lagrangian density of Eq. (\ref{huytrji}). Its evaluated expression in dimensional regularization in $d = 4 - \epsilon$ can be written as a general functional $\mathbf{\Pi}^{L}\mathcal{G}(g,P^{2} + K_{\mu\nu}P^{\mu}P^{\nu},\epsilon,\kappa)$ if its rotation-invariant counterpart is given by $\mathcal{G}(g,P^{2},\\ \epsilon,\kappa)$, where $L$ is the number of loops of the corresponding diagram.
\end{theorem}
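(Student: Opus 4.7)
The plan is to show that a single linear change of loop momenta, applied uniformly to every loop integration, converts every propagator in the diagram to its rotation-invariant form at the cost of exactly one Jacobian factor $\mathbf{\Pi}$ per loop. A generic Feynman diagram generated by the Lagrangian (\ref{huytrji}) in momentum space is a product of propagators $[k_i^{\mu}(\delta_{\mu\nu}+K_{\mu\nu})k_i^{\nu}]^{-1}$, in which each internal line momentum $k_i$ is a fixed linear combination $k_i=\sum_{a}\alpha_{a}^{i}P_{a}+\sum_{l}\beta_{l}^{i}q_{l}$ of external momenta $P_{a}$ and loop momenta $q_{l}$, integrated over $\prod_{l}d^{d}q_{l}$. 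Since $\mathbb{I}+\mathbb{K}$ is symmetric and positive definite in the physical regime, its symmetric positive-definite square root $A\equiv\sqrt{\mathbb{I}+\mathbb{K}}$ exists, satisfies $A^{T}A=\mathbb{I}+\mathbb{K}$, and has $\det A=\sqrt{\det(\mathbb{I}+\mathbb{K})}=\mathbf{\Pi}^{-1}$.

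First I would perform simultaneously on every loop momentum the change of variables $\tilde{q}_{l}=Aq_{l}$. Because $A$ is a constant matrix, each measure transforms as $d^{d}q_{l}=\mathbf{\Pi}\,d^{d}\tilde{q}_{l}$, so the product of $L$ loop measures contributes an overall factor $\mathbf{\Pi}^{L}$, exactly matching the desired prefactor. Next I would rewrite the propagators by also introducing formal transformed external labels $\tilde{P}_{a}\equiv AP_{a}$. Linearity gives $Ak_{i}=\sum_{a}\alpha_{a}^{i}\tilde{P}_{a}+\sum_{l}\beta_{l}^{i}\tilde{q}_{l}\equiv\tilde{k}_{i}$, and the quadratic form in each denominator becomes $k_{i}^{\mu}(\delta_{\mu\nu}+K_{\mu\nu})k_{i}^{\nu}=(Ak_{i})\cdot(Ak_{i})=\tilde{k}_{i}^{\,2}$. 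Each propagator therefore reduces to the rotation-invariant form $1/\tilde{k}_{i}^{\,2}$ with the same topology and combinatorial coefficients, so the full integrand coincides with that of the corresponding rotation-invariant diagram in the tilded variables. Performing the $\tilde{q}_{l}$ integrations then yields, by definition of $\mathcal{G}$, the rotation-invariant expression evaluated on the transformed external data.

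Finally I would translate the dependence on the $\tilde{P}_{a}$ back to the original external momenta. The rotation-invariant function $\mathcal{G}$ can depend on its external momenta only through their scalar products $\tilde{P}_{a}\cdot\tilde{P}_{b}=P_{a}^{\mu}(\delta_{\mu\nu}+K_{\mu\nu})P_{b}^{\nu}$, which in the single-external-momentum situation written in the theorem collapse to the single argument $P^{2}+K_{\mu\nu}P^{\mu}P^{\nu}$. The coupling $g$ and the renormalization scale $\kappa$ are untouched by the change of loop variables and pass through unaltered. Substituting these identifications delivers exactly $\mathbf{\Pi}^{L}\mathcal{G}(g,P^{2}+K_{\mu\nu}P^{\mu}P^{\nu},\epsilon,\kappa)$, closing the proof.

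The step I expect to be the main obstacle is the careful justification of the simultaneous change of variables inside a dimensionally-regularized integral: one must argue, in a region of $d$ where the original integrals are absolutely convergent, that the linear transformation is benign (no boundary contributions and no rearrangement issues arise), and then extend the resulting identity by analytic continuation in $\epsilon$. A secondary check is that diagrams carrying composite $\phi^{2}$ insertions, needed for the vertex parts $\Gamma^{(n,l)}$ with $l\neq 0$, do not spoil the argument; they do not, because such insertions only modify which external momenta flow into which internal lines and leave the propagator structure, and hence the applicability of the change of variables, intact.
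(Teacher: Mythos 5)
Your proposal is correct and follows essentially the same route as the paper: the simultaneous linear substitution $q\rightarrow\sqrt{\mathbb{I}+\mathbb{K}}\,q$ on all $L$ loop momenta, with the Jacobian supplying one factor of $\mathbf{\Pi}$ per loop and the transformed external momenta carrying the dependence $P^{2}+K_{\mu\nu}P^{\mu}P^{\nu}$. You simply make explicit several steps the paper leaves implicit (the reduction of every propagator to $1/\tilde{k}_{i}^{2}$ via $A^{T}A=\mathbb{I}+\mathbb{K}$, the collapse of the external-momentum dependence to invariants, and the analytic-continuation justification of the change of variables in dimensional regularization), which strengthens rather than alters the argument.
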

\begin{proof} 
A general Feynman diagram of loop level $L$ is a multidimensional integral in $L$ distinct and independent momentum integration variables $q_{1}$, $q_{2}$,...,$q_{L}$, each one with volume element given by $d^{d}q_{i}$ ($i = 1, 2,...,L$). As showed in last Section, the substitution $q^{\prime} = \sqrt{\mathbb{I} + \mathbb{K}}\hspace{1mm}q$ transforms each volume element as $d^{d}q^{\prime} = \sqrt{det(\mathbb{I} + \mathbb{K})}d^{d}q$. Thus $d^{d}q = d^{d}q^{\prime}/\sqrt{det(\mathbb{I} + \mathbb{K})} \equiv \mathbf{\Pi}d^{d}q^{\prime}$, $\mathbf{\Pi} = 1/\sqrt{det(\mathbb{I} + \mathbb{K})}$. Then, the integration in $L$ variables results in a rotation symmetry-breaking overall factor of $\mathbf{\Pi}^{L}$. Now making $q^{\prime} \rightarrow P^{\prime}$ in the substitution above, where $P^{\prime}$ is the transformed external momentum, then $P^{\prime 2} = P^{2} + K_{\mu\nu}P^{\mu}P^{\nu}$. So a given Feynman diagram, evaluated in dimensional regularization in $d = 4 - \epsilon$, assumes the expression $\mathbf{\Pi}^{L}\mathcal{G}(g,P^{2} + K_{\mu\nu}P^{\mu}P^{\nu},\epsilon,\kappa)$, where $\mathcal{G}$ is associated to the corresponding diagram if the rotation-invariant Feynman diagram counterpart evaluation results in $\mathcal{G}(g,P^{2},\epsilon,\kappa)$.
\end{proof}

Now we can write a general Feynman diagram in the form $\mathbf{\Pi}^{L}\mathcal{G}(g,P^{2} + K_{\mu\nu}P^{\mu}P^{\nu},\epsilon,\kappa)$ if its rotation-invariant counterpart is given by $\mathcal{G}(g,P^{2},\epsilon,\kappa)$, where $L$ is the number of loops of referred diagram. Thus, as the general function depends only on $g$ and at a given of its terms each term of one order of $g$ is also a term of one order of loop, it is given by $F(\mathbf{\Pi} g^{\ast})$. The all-loop rotation symmetry-breaking fixed point $g^{\ast}$, taking into account the rotation symmetry-breaking breaking mechanism exactly, is related to its rotation-invariant counterpart $g^{\ast(0)}$ through $g^{\ast} = g^{\ast(0)}/\mathbf{\Pi}$ \cite{CarvalhoSenaJunior}. Now, substituting $g^{\ast} = g^{\ast(0)}/\mathbf{\Pi}$, we have that $F(\mathbf{\Pi} g^{\ast}) \equiv F(g^{\ast(0)})$. Then, the rotation symmetry-breaking general function is the same as the corresponding rotation-invariant one, leading to the rotation symmetry-breaking amplitude ratios identical to their rotation-invariant counterparts. This completes our task.

\section{Conclusions}

\par In this paper, we have evaluated the all-loop quantum corrections to the amplitude ratios for rotation symmetry-breaking O($N$) $\lambda\phi^{4}$ scalar field theories, taking exactly the rotation symmetry-breaking mechanism into account, through coordinates redefinition applied in momentum space directly in Feynman diagrams, and thus avoiding the tedious calculation in powers of $K_{\mu\nu}$, by employing field-theoretic renormalization group, dimensional regularization and $\epsilon$-expansion techniques in three distinct methods. Firstly, we have explicitly computed analytically the amplitude ratios at the one-loop level and finally, in a proof by induction through a general theorem emerging from the exact calculation, computed the quantum corrections for any loop level. We have showed that, although a same rotation symmetry-breaking critical amplitude can be different in distinct methods and thus can dependent on the renormalization method employed, the outcome for the amplitude ratios are the same and, furthermore, equal to their rotation-invariant counterparts. This result reveals that the amplitude ratios do not depend on the renormalization scheme employed and on the exact rotation symmetry-breaking mechanism, thus being universal quantities and ratifying the robustness of the O($N$) two-scale-factor universality hypothesis. We can interpret physically this result by realizing that the symmetry breaking mechanism does not occur in the internal symmetry space of the field, but in the one in which the field is defined \cite{Aharony}. Then, if the amplitude ratios are really to be universal quantities, the values of the amplitude ratios must be not affected by this symmetry breaking mechanism and they depend just, as usual, on the $d$ and $N$ parameters and the symmetry of the $N$-component order parameter. Furthermore, this work can shed light on the understanding of the exact rotation symmetry-breaking properties of corrections to scaling and finite-size scaling of rotation symmetry-breaking amplitude ratios as well as critical exponents in geometries subjected to different boundary conditions for systems undergoing a continuous phase transition for the systems studied here as well as for anisotropic ones \cite{Leite104415,Leite224432,CarvalhoLeite178,CarvalhoLeite151}.

\section{Acknowledgements}

\par JFSN, KALL and PRSC would like to thank FAPEPI-CAPES, CAPES (Brazilian funding agencies) and Universidade Federal do Piau\'{i} for financial support, respectively. MISJ was financially supported by FAPEAL (Alagoas State Research Foundation) and CNPq (Brazilian Funding Agency).

\end{document}